\newtheorem{thm}{Theorem}
\newtheorem{lem}[thm]{Lemma}
\newtheorem{prop}[thm]{Proposition}
\newtheorem{defn}[thm]{Definition}
\def\P{{\mathsf P}}
\newcommand{\Ti}{\mathcal{T}}
\newcommand{\uh}{\hat{s}}
\newcommand{\aep}{\Ti_{\epsilon}^{(n)}}
\newcommand{\e}{\epsilon}
\begin{document}

\title{Correlated Sources over Broadcast Channels}

\author{Paolo Minero and Young-Han Kim%
\thanks{The authors are with the Information Theory and Applications Center (ITA) of the California Institute of Telecommunications and Information Technologies (CALIT2), Department of Electrical and Computer Engineering, University of California, San Diego CA, 92093, USA. Email: pminero@ucsd.edu, yhk@ucsd.edu.}
\thanks{This work is partially supported by the National Science Foundation
award CNS-0546235 and CAREER award CCF-0747111.}
}

\maketitle

\begin{abstract}
The problem of reliable transmission of correlated sources over the
broadcast channel, originally studied by Han and Costa, is revisited.
An alternative characterization of their sufficient condition for
reliable transmission is given, which includes results of Marton for
channel coding over broadcast channels and of Gray and Wyner for
distributed source coding. A ``minimalistic'' coding scheme is
presented, which is based on joint typicality encoding and decoding,
without requiring the use of Cover's superposition coding, random
hashing, and common part between two sources. The analysis of the
coding scheme is also conceptually simple and relies on a new
multivariate covering lemma and an application of the Fourier--Motzkin
elimination procedure.
\end{abstract}

\section{Introduction}
We study the problem of broadcasting two arbitrarily correlated
sources over a general two-receiver discrete memoryless broadcast
channel (DM-BC) as depicted in Figure \ref{fig:1}. A discrete
memoryless stationary source produces $n$ independent copies $(S_1^n,
S_2^n) \triangleq \{(S_{1,i},S_{2,i})\}_{i=1}^n$ of a pair of random
variables $(S_1,S_2) \sim p(s_1,s_2)$.  The encoder maps the source
sequences $(S_1^n, S_2^n)$ into a sequence $X^n$ from a finite set
$\cal X$ and broadcasts it to two separate receivers. The
communication channel is memoryless and modeled by a transition
probability matrix $p(y_1,y_2|x)$, which maps each channel input $x$
into a pair of finite valued output symbols $(y_1,y_2)$.  Decoder 1
maps the channel output sequence $Y^n_1$ into an estimate
$\hat{S}_1^n$ of the source sequence $S^n_1$. Similarly, decoder 2
maps $Y^n_2$ into an estimate $\hat{S}_2^n$ of the source sequence
$S_2^n$. For a given encoder and decoders, the average probability of
error is defined as
\begin{equation*}
\text{Pr}\{(\hat{S}^n_1,\hat{S}^n_2) \neq  (S^n_1, S^n_2) \}.
\end{equation*}
We say that the sources $(S_1,S_2)$ can be reliably transmitted over
the DM-BC $p(y_1,y_2|x)$ if there exists a sequence of encoders and
decoder pairs such that the average probability of error vanishes as
$n \to \infty$.

\begin{figure}
\begin{center}
\scalebox{0.89}{
\psfrag{W}[b]{\hspace{0.9em} $(S_1^n,S_2^n)$}
\psfrag{W1}[b]{\hspace{-0.5em} $\hat{S}_1^n$}
\psfrag{W2}[b]{\hspace{-0.5em} $\hat{S}_2^n$}
\psfrag{A}[cl]{\hspace{-1.7em} {\small Encoder}}
\psfrag{X}{\hspace{-0.9em} $X^n$}
\psfrag{B}[cl]{\hspace{-2.4em} {\small Decoder 1}}
\psfrag{C}[cl]{\raisebox{0.2em}{\hspace{-2.2em} {\small Decoder 2}}}
\psfrag{Y1}{\hspace{-0.9em} $Y_1^n$}
\psfrag{Y2}{\hspace{-0.9em} $Y_2^n$}
\psfrag{p}{\hspace{-2.39em} $p(y_1,y_2|x)$}
\includegraphics[width=3.6in]{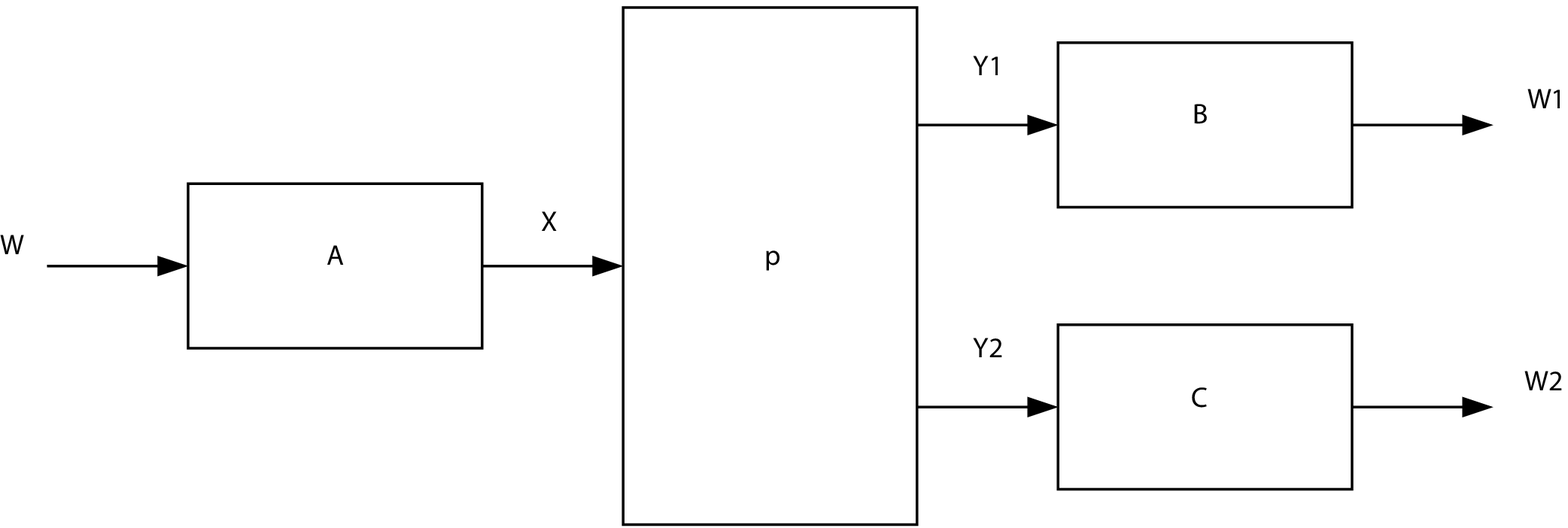}}
\end{center}
\caption{The two-receiver broadcast channel with arbitrarily correlated sources.} \label{fig:1}
\end{figure}

Han and Costa~\cite{HanCosta} provide a sufficient condition for
reliable transmission of the sources $(S_1,S_2)$ over a general DM-BC
$p(y_1,y_2|x)$. Necessary conditions are derived by Gohari and
Anantharam~\cite{Gohari}, and Kramer \textit{et
  al.}~\cite{KramerLiangShamai}. Finding matching sufficient and
necessary conditions is still an open problem in general.

Two special classes of sources and channels have been studied
extensively in the literature.  On the one hand, suppose
$S_1=(W_0,W_1)$ and $S_2=(W_0,W_2)$, where $W_0$, $W_1$, and $W_2$ are
three mutually independent random variables of entropies $R_0$, $R_1$,
and $R_2$, respectively. The (closure of the) set of triples
$(R_0,R_1,R_2)$ which can be reliably transmitted is called the {\em
  capacity region} of the DM-BC. Inner bounds on the capacity region
of a general DM-BC are developed in \cite{Cover, Csiszar, Gelfand,
  Liang, Marton, vanderMeulen}, and the most general known achievable
rate region is due to Marton~\cite{Marton}, which is tight for several
classes of broadcast channels.  On the other hand, suppose the DM-BC
is noiseless with $Y_1 = (X_0,X_1)$, $Y_2 = (X_0,X_2)$, where $X_0$,
$X_1$, and $X_2$ have cardinalities $2^{R_0}$, $2^{R_1}$, and $2^{R_2}$,
respectively. The (closure of the) set of triples $(R_0,R_1,R_2)$
which can be reliably transmitted is called the {\em distributed
  source coding region}, and its complete characterization is given by
Gray and Wyner~\cite{Gray}.

For transmission of arbitrarily correlated sources over a general
noisy DM-BC, conclusive results are known for the case in which either
source or channel has a degraded structure.  More specifically, Han
and Costa's sufficient condition matches known necessary conditions if
decoder 1 is interested in recovering both source sequences
$(S_1^n,S_2^n)$ while decoder 2 is interested in $S_2^n$ only
(degraded source sets)~\cite{KangKramer} or if the DM-BC is degraded,
or more generally, ``more capable''~\cite{ElGamal2,
  KramerLiangShamai}.

In this paper, we provide an alternative characterization to the
sufficient condition of Han and Costa for reliable transmission of
correlated sources. This new characterization, while algebraically
equivalent to Han and Costa's original sufficient condition, is
simpler and includes in a straightforward manner Marton's inner bound
on the broadcast channel capacity region, Gray--Wyner distributed
source coding region, and other aforementioned known results. In
particular, the new characterization does not involve the common part
of the two sources in the sense of G\'{a}cs, K\"orner, and
Witsenhausen~\cite{gaks,Witsenhausen}. This implies that there is no special role
played by the common part of the sources, confirming the standard
engineering intuition.  It is interesting to observe the difference to
the problem of reliable transmission of correlated sources over
{\em multiple access channels}. For this problem, the common part of the
sources plays a pivotal role in Cover, El Gamal, and Salehi's random
coding construction~\cite{CoverSalehi} by inducing coherent
transmission from separate encoders.

The highlight of the paper is a simple coding scheme for the
sufficient condition and its analysis. This coding scheme is based on
joint typicality encoding and decoding, and does not involve random
hashing (Slepian--Wolf binning), Cover's superposition coding, or rate
splitting as in the original coding scheme by Han and Costa. The
performance of the coding scheme is analyzed via a new multivariate
covering lemma and an application of the Fourier--Motzkin elimination
procedure.

The rest of the paper is organized as follows. In the next section, we
review Han and Costa's sufficient condition (with a recent correction
by Kramer and Nair~\cite{KramerNair}). Section~III presents the
alternative characterization of the sufficient condition. The proof of
this new coding theorem is given in Section~IV.  We conclude the paper
in Section V with a discussion on implications of our coding scheme,
and a sketch of yet another coding scheme for broadcasting correlated
sources. Throughout the paper, we use the notation in \cite{ElGamalKim}.

\section{Han and Costa's Coding Theorem}
For convenience, we recall here the coding theorem of Han and Costa
with a
few minor changes.
\begin{thm}[\cite{HanCosta}, \cite{KramerNair}]
\label{thmHC}
The pair of sources $(S_1, S_2)$ can be reliably transmitted over
the DM-BC $p(y_1,y_2|x)$ if
\begin{align}
H(S_1) &< I(U_0, U_1, S_1; Y_1) - I(U_0, U_1; S_2|S_1), \nonumber \\
H(S_2) &< I(U_0, U_2, S_2; Y_2) - I(U_0, U_2; S_1|S_2), \nonumber \\
H(S_1,S_2) &< I(K, U_0, U_1, S_1; Y_1) + I(U_2, S_2; Y_2|K, U_0) \nonumber \\
&  \quad               - I(U_1, S_1; U_2,S_2|K,U_0),  \label{HC1} \\
H(S_1,S_2) &< I(U_1, S_1;Y_1|K, U_0) + I(K, U_0, U_2, S_2; Y_2) \nonumber \\
&  \quad               - I(U_1, S_1; U_2, S_2|K, U_0), \label{HC2}  \\
H(S_1,S_2) &< I(U_0, U_1, S_1; Y_1) + I(U_0, U_2, S_2; Y_2) \nonumber \\
&  \quad      - I(U_1, S_1; U_2, S_2|K, U_0) - I(S_1, S_2; K, U_0), \label{HC3}
\end{align}
for some $p(u_0,u_1,u_2,x|s_1,s_2)$. Here $K=f(S_1)=g(S_2)$ denotes
the common variable in the sense of G\'{a}cs, K\"orner, and
Witsenhausen, and the auxiliary random variable $U_0$ has the
cardinality bound $|{\cal U}_0| \le \min \{ |{\cal X}| |{\cal S}_1|
|{\cal S}_2| + 4, |{\cal Y}_1| |{\cal Y}_2| |{\cal S}_1| |{\cal S}_2|
+ 4 \}$.
\end{thm}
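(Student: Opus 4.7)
\medskip

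\noindent\textbf{Proof plan.} The plan is to follow the classical random coding architecture for joint source--channel broadcasting, in the spirit of Han and Costa. At the outer layer I would use Cover's superposition coding with cloud center $U_0^n$, and layer the Marton-style codewords $U_1^n,U_2^n$ inside each cloud. The common variable $K=f(S_1)=g(S_2)$ would be computed directly from either source and treated as a ``free'' coordinate on top of which $U_0$ is superposed; this lets $K^n$ be recovered losslessly by both decoders before any channel decoding is invoked.

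Concretely, for each typical $k^n$ I would generate $2^{nR_0}$ cloud centers $U_0^n(k^n,m_0)$ i.i.d.\ $\sim\prod p(u_{0,i}|k_i)$, and, for each $(k^n,u_0^n(m_0))$, generate $2^{n\tilde R_1}$ codewords $U_1^n(k^n,m_0,\ell_1)$ and $2^{n\tilde R_2}$ codewords $U_2^n(k^n,m_0,\ell_2)$ with the appropriate conditional marginals, then partition each into bins of sizes $2^{nR_1}$ and $2^{nR_2}$, so that $\tilde R_j-R_j$ serves as the Marton binning slack. Encoding proceeds in three stages: (i) observe $K^n$ from $(S_1^n,S_2^n)$ and select $m_0$ so that $(K^n,U_0^n(m_0),S_1^n,S_2^n)$ are jointly typical (Slepian--Wolf covering); (ii) within each bin choose $(\ell_1,\ell_2)$ so that $(K^n,U_0^n,U_1^n,U_2^n,S_1^n,S_2^n)$ are jointly typical (Marton mutual covering); (iii) generate $X^n$ symbol by symbol from the conditional $p(x|u_0,u_1,u_2,s_1,s_2)$. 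Decoder 1 looks for the unique $(m_0,\ell_1,\hat S_1^n)$ with $(K^n(\hat S_1^n),U_0^n(m_0),U_1^n(m_0,\ell_1),\hat S_1^n,Y_1^n)$ jointly typical, and decoder 2 proceeds symmetrically.

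The error analysis splits into covering-type events (the encoder fails to find an admissible $(m_0,\ell_1,\ell_2)$) and packing-type events (an incorrect source or auxiliary index looks jointly typical with the output). Standard covering-lemma bounds give $R_0 > I(U_0;S_1,S_2|K)$, $\tilde R_1 > I(U_1;S_1,S_2|K,U_0)$, $\tilde R_2 > I(U_2;S_1,S_2|K,U_0)$, and the Marton mutual-covering constraint $(\tilde R_1-R_1)+(\tilde R_2-R_2) > I(U_1;U_2|K,U_0,S_1,S_2)$. Packing at receiver~1 yields constraints of the form $R_0+R_1+H(S_1|K) < I(K,U_0,U_1,S_1;Y_1)$ (together with partial versions where only a subset of indices is wrong), and symmetrically at receiver~2. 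Fourier--Motzkin elimination of the internal rates $R_0,R_1,R_2,\tilde R_1,\tilde R_2$ then collapses the polytope to the five inequalities \eqref{HC1}--\eqref{HC3} plus the two single-receiver bounds.

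The step I expect to be the main obstacle is the bookkeeping in the mutual covering stage: the source sequences $(S_1^n,S_2^n)$ appear in the joint typicality conditions for $(U_1^n,U_2^n)$ together with $K^n$, and one must be careful that the bin index pair $(\ell_1,\ell_2)$ survives conditioning on both receivers' decoded quantities. A second delicate point is the Fourier--Motzkin elimination itself: producing exactly the three ``sum'' inequalities $\eqref{HC1}$--$\eqref{HC3}$ requires tracking the partial-error events where only $(\ell_1,S_1^n)$ is incorrect versus where $(m_0,\ell_1,S_1^n)$ are all incorrect, and similarly for receiver~2; omitting any of these weakens the region. The cardinality bound on $U_0$ would follow at the end from a standard Carath\'eodory-type argument applied to the convex hull of distributions satisfying the five inequalities.
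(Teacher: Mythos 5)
Your plan reconstructs the original Han--Costa architecture --- superposition coding on a cloud center $U_0$, Slepian--Wolf style hashing/binning, Marton's mutual covering, and explicit use of the common part $K$ --- and in principle that route does prove the theorem (it is how Han and Costa proved it, modulo the correction of Kramer and Nair). But it is the opposite of what this paper does. The paper never proves Theorem~\ref{thmHC} by this machinery: it proves the equivalent Theorem~\ref{thm1}, whose coding scheme is deliberately ``minimalistic'' --- pure joint typicality covering of $(S_1^n,S_2^n)$ by independently generated $U_0^n(m_0)$, $U_1^n(s_1^n,m_1)$, $U_2^n(s_2^n,m_2)$ (the multivariate covering lemma, Lemma~\ref{lem}), joint typicality decoding, and a carefully organized Fourier--Motzkin elimination --- with no superposition, no hashing, and no common part. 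Theorem~\ref{thmHC} then falls out as the special case $\tilde{U}_0=(U_0,K)$ of Theorem~\ref{thm1}, as noted in Remark~1. What the paper's route buys is exactly its thesis: none of the ingredients your proof is built around is actually necessary. What your route buys is fidelity to the historical proof, at the cost of much heavier bookkeeping.

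On that bookkeeping, two of the steps you defer are precisely where the original argument needed repair, so they should not be treated as routine. First, the claim that $K^n$ is ``recovered losslessly by both decoders before any channel decoding is invoked'' cannot be literally right, since the decoders observe only $Y_j^n$; in the packing analysis you must split the event that a wrong $\hat{s}_1^n$ looks typical according to whether $f(\hat{s}_1^n)$ equals $f(S_1^n)$ or not --- only the former case yields the $H(S_1|K)$ exponent you wrote, and the latter produces additional constraints that must survive the elimination. Second, the Fourier--Motzkin step is not a formality here: the published Han--Costa region had to be corrected by Kramer and Nair, and the present paper goes to some lengths (the auxiliary variables $v_1,\dotsc,v_{11}$, the inequalities \eqref{eq:s3}, machine elimination, and the $H(W)\to\infty$ device for discarding inactive inequalities) to make this step checkable. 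Asserting that the elimination ``collapses the polytope to the five inequalities'' without exhibiting which covering and packing constraints are retained leaves a genuine gap at exactly the historically error-prone point.
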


The coding scheme by Han and Costa requires three auxiliary random
variables $(U_0,U_1,U_2)$. Roughly speaking, $U_0$ serves as a ``cloud
center'' distinguishable by both receivers and carries the common part
$K$ and hash indexes of $(S_1,S_2)$, $U_1$ and $U_2$ are codewords
within the cloud centers which encode the remaining uncertainty about
$S_1$ and $S_2$, respectively.  Finally, Marton's subcode generation
technique~\cite{Marton} is employed to obtain arbitrary correlation
among $(U_0,U_1,U_2)$. For details of this interpretation, see
\cite{HanCosta}.

\section{An Alternative Characterization}
\label{sec3}
We have the following:
\begin{thm}
\label{thm1}
The sources $(S_1, S_2)$ can be reliably transmitted over the DM-BC
$p(y_1,y_2|x)$ if
\allowdisplaybreaks
\begin{align}
H(S_1) &< I(U_0, U_1, S_1; Y_1) - I(U_0, U_1; S_2|S_1),  \nonumber \\
H(S_2) &< I(U_0, U_2, S_2; Y_2) - I(U_0, U_2; S_1|S_2), \nonumber \\
H(S_1, S_2) &< I(U_0, U_1, S_1; Y_1) + I(U_2, S_2; Y_2|U_0) \nonumber \\
&  \quad               - I(U_1, S_1; U_2, S_2|U_0), \label{KM1} \\
H(S_1, S_2) &< I(U_1, S_1;Y_1|U_0) + I(U_0, U_2, S_2; Y_2) \nonumber \\
&  \quad               - I(U_1, S_1; U_2, S_2|U_0),  \label{KM2} \\
H(S_1, S_2) &< I(U_0, U_1, S_1; Y_1) + I(U_0, U_2, S_2; Y_2) \nonumber \\
&  \quad      - I(U_1, S_1; U_2, S_2|U_0) - I(S_1, S_2; U_0). \label{KM3}
\end{align}
for some $p(u_0,u_1,u_2,x|s_1,s_2)$. Here the auxiliary random
variable $U_0$ has the cardinality bound $|\mathcal{U}_0| \le \min \{ |{\cal X}| |{\cal S}_1|
|{\cal S}_2| + 4, |{\cal Y}_1| |{\cal Y}_2| |{\cal S}_1| |{\cal S}_2| + 4\}$.
\end{thm}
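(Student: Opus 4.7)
The plan is to prove Theorem~\ref{thm1} by a random coding argument based on joint typicality encoding and joint typicality decoding with three independently generated codebooks of rates $R_0, \tilde R_1, \tilde R_2$, and then to recover the five inequalities of the theorem by Fourier--Motzkin elimination of these rate variables. No Slepian--Wolf binning, Cover superposition, rate splitting, or common-part extraction is used.

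First I would fix an admissible $p(u_0,u_1,u_2,x\mid s_1,s_2)$ and construct the codebook as follows. For $m_0\in[1{:}2^{nR_0}]$ draw $U_0^n(m_0)$ i.i.d.\ from $\prod p(u_0)$; then for each $m_0$ and $j=1,2$ draw satellite codewords $U_j^n(m_0,\ell_j)$ i.i.d.\ from $\prod p(u_j\mid u_0)$ for $\ell_j\in[1{:}2^{n\tilde R_j}]$. Given $(S_1^n,S_2^n)$, the encoder picks an $m_0$ with $(U_0^n(m_0),S_1^n,S_2^n)\in\aep$, then a pair $(\ell_1,\ell_2)$ for which the tuple $(U_0^n(m_0),U_1^n(m_0,\ell_1),U_2^n(m_0,\ell_2),S_1^n,S_2^n)$ lies in $\aep$, and finally transmits $X^n\sim\prod p(x\mid u_0,u_1,u_2,s_1,s_2)$. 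The first step succeeds w.h.p.\ once $R_0>I(U_0;S_1,S_2)$, by the ordinary covering lemma. For the second step I would invoke a new multivariate covering lemma, whose second-moment argument on the conditionally independent pair $(U_1^n(\cdot),U_2^n(\cdot))$ targeting $(S_1^n,S_2^n)$ given $U_0^n$ produces success w.h.p.\ provided
\begin{align*}
\tilde R_1 &> I(U_1;S_1,S_2\mid U_0),\\
\tilde R_2 &> I(U_2;S_1,S_2\mid U_0),\\
\tilde R_1+\tilde R_2 &> I(U_1,U_2;S_1,S_2\mid U_0)+I(U_1;U_2\mid U_0).
\end{align*}

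Decoder~$j$ searches for $(\hat m_0,\hat\ell_j,\hat s_j^n)$ with $(\hat s_j^n,U_0^n(\hat m_0),U_j^n(\hat m_0,\hat\ell_j),Y_j^n)\in\aep$ and outputs $\hat s_j^n$. For Decoder~1 I would partition the error event by which subset of $\{\hat m_0\!=\!M_0,\hat\ell_1\!=\!L_1,\hat s_1^n\!=\!S_1^n\}$ fails. In each case, codewords indexed under a wrong $m_0$ (respectively a wrong $\ell_1$ given $m_0$) are drawn independently of $(S_1^n,Y_1^n)$ given the appropriate higher-level codeword, so the joint typicality lemma bounds the probability of any single competing triple by $2^{-nI(\cdot)}$ for an explicit mutual information term; a union bound over at most $2^{n(R_0+\tilde R_1)}\cdot|\aep(S_1)|$ competing triples then yields a family of linear packing inequalities. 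The most informative take the forms $\tilde R_1<I(U_1;S_1,Y_1\mid U_0)$ and $R_0+\tilde R_1<I(U_0,U_1;S_1,Y_1)$, together with analogues in which an extra $H(S_1)$ contribution appears (from the $|\aep(S_1)|$ enumeration) whenever the competing $\hat s_1^n$ differs from $S_1^n$. The symmetric analysis at Decoder~2 produces the corresponding inequalities in $(R_0,\tilde R_2)$ and $(U_0,U_2,S_2,Y_2)$.

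Finally I would assemble the covering and packing inequalities into a single polytope in the nonnegative variables $(R_0,\tilde R_1,\tilde R_2)$ and eliminate these three variables by Fourier--Motzkin elimination; the projection onto the $H$-quantities $H(S_1),H(S_2),H(S_1,S_2)$ should be exactly the five inequalities \eqref{KM1}--\eqref{KM3}, the first two arising when only a single decoder's constraints are binding and the last three when both decoders are. The cardinality bound on $|\mathcal U_0|$ then follows from a standard Carath\'eodory/Fenchel--Eggleston support-size argument. I expect the main obstacle to be the new multivariate covering lemma: because $U_1^n$ and $U_2^n$ are drawn independently conditional on $U_0^n$ yet must jointly cover the correlated pair $(S_1^n,S_2^n)$, Marton's ordinary mutual covering lemma does not apply directly, and the second-moment computation must be organized so that the sum rate condition emerges in the clean form above. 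A secondary bookkeeping difficulty is to arrange the per-case packing inequalities into a sufficiently symmetric form that Fourier--Motzkin recovers exactly the five inequalities of the theorem, rather than a longer redundant list.
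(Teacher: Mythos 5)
Your high-level plan (joint typicality covering at the encoder, joint typicality decoding, no binning or common part, Fourier--Motzkin at the end) matches the paper's, but your codebook construction has a fatal gap: you generate the satellite codewords $U_j^n(m_0,\ell_j)$ conditionally on the cloud center only, so the codewords carry no index identifying the source sequence. In the paper's scheme the codewords are $U_1^n(s_1^n,m_1)\sim \prod_i p_{U_1|S_1}(u_{1i}\mid s_{1i})$ --- one independent sub-codebook \emph{per source sequence} $s_1^n$ --- and this source-indexed codebook is precisely the mechanism that replaces Slepian--Wolf hashing: for a wrong $\hat s_1^n$ the codeword $U_1^n(\hat s_1^n,m_1)$ is freshly generated, independent of $(U_0^n(M_0),Y_1^n)$, so the joint typicality lemma applies and a union bound over the $2^{nH(S_1)}$ candidate sources and $2^{nR_1}$ indices yields $H(S_1)+R_1 < I(U_1,S_1;U_0,Y_1)$. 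In your scheme this mechanism is absent. Look at your error event ``$(\hat m_0,\hat\ell_1)=(M_0,L_1)$ correct, $\hat s_1^n$ wrong'': the competing tuple $(\hat s_1^n,U_0^n(M_0),U_1^n(M_0,L_1),Y_1^n)$ involves the \emph{true} codewords and the true channel output, and the number of sequences $\hat s_1^n$ jointly typical with a fixed typical $(u_0^n,u_1^n,y_1^n)$ is on the order of $2^{nH(S_1|U_0,U_1,Y_1)}$, which is exponentially large unless $H(S_1\mid U_0,U_1,Y_1)=0$. The hypotheses of Theorem~\ref{thm1} do not force this conditional entropy to vanish, so your decoder fails with probability tending to one, and no choice of $(R_0,\tilde R_1,\tilde R_2)$ helps, because the competing codeword is not independent of $Y_1^n$ and the joint typicality lemma gives no exponential decay for this event. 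Note that even the paper's superposition variant in Appendix~D, which is the construction closest to yours, still indexes the satellite codebooks by the source sequence ($u_1^n(s_1^n,m_0,m_1)$ drawn from $p_{U_1|S_1,U_0}$).

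Two further points, secondary to the above. First, once the codebooks are generated conditionally on the sources and \emph{without} superposition on $U_0^n$ (three mutually independent collections $U_0^n(m_0)$, $U_1^n(s_1^n,m_1)$, $U_2^n(s_2^n,m_2)$), the covering lemma you need is not the conditional Marton mutual covering lemma you quote but the seven-inequality multivariate covering lemma of the paper (Lemma~\ref{lem}), whose second-moment proof must handle all $2^3-1$ patterns of index agreement. Second, the Fourier--Motzkin step is not as clean as you anticipate: brute-force elimination of $(R_0,R_1,R_2)$ produces 28 inequalities, and the paper needs both a set of auxiliary information inequalities among the bounds $v_1,\dotsc,v_{11}$ and a padding argument (replacing $U_i$ by $(U_i,W)$ with $W$ independent and letting $H(W)\to\infty$) to discard the inactive ones and arrive at exactly the five inequalities of the theorem. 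Your expectation that the projection ``should be exactly'' the five inequalities glosses over this nontrivial reduction.
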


\emph{Remark 1:} It is easy to see that Theorem~\ref{thmHC} is a
special case of Theorem~\ref{thm1}. In fact, if we define a new random
variable as $\tilde{U}_0=(U_0,K)$, where $K$ denotes the common part
of the sources $(S_1,S_2)$, then the inequalities in
Theorem~\ref{thm1} for the triple $(\tilde{U}_0,U_1,U_2)$ simplify and
reduce to those in Theorem~\ref{thmHC}. Conversely, the following
proposition proves that Theorem~\ref{thm1} is a special case of
Theorem \ref{thmHC}, establishing the equivalence of the two.
\begin{prop}
\label{prop}
The sufficient condition for reliable transmission in Theorem
\ref{thmHC} is equivalent to the one in Theorem~\ref{thm1}.
\end{prop}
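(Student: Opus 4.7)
The plan is to establish both containments between the regions of Theorems~\ref{thmHC} and~\ref{thm1}. One direction is essentially the content of Remark~1: starting from $(U_0,U_1,U_2)$ satisfying Theorem~\ref{thmHC}, set $\tilde U_0 = (U_0, K)$; because $K$ is a deterministic function of either source, $K$ may be inserted or deleted inside every mutual information term without changing its value, and the inequalities of Theorem~\ref{thm1} reduce line by line to those of Theorem~\ref{thmHC}.

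For the reverse direction, I will take an arbitrary tuple $(U_0,U_1,U_2)$ satisfying the five inequalities of Theorem~\ref{thm1} and show that the \emph{same} tuple, combined with the canonical common variable $K$ of $(S_1,S_2)$, satisfies those of Theorem~\ref{thmHC}. The bounds on $H(S_1)$ and $H(S_2)$ are identical in the two theorems, so only the three joint-entropy inequalities \eqref{HC1}--\eqref{HC3} need checking. For each, I form the difference between Theorem~\ref{thmHC}'s right-hand side and Theorem~\ref{thm1}'s right-hand side, both evaluated at $(U_0,U_1,U_2)$, and aim to show it is nonnegative; strict inequality in Theorem~\ref{thm1} then forces strict inequality in Theorem~\ref{thmHC}.

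The computation is driven by the elementary identity
\begin{equation*}
I(A;B\mid C,D) - I(A;B\mid C) = I(A;D\mid B,C) - I(A;D\mid C),
\end{equation*}
applied with $D=K$, together with the two deterministic relations $K=f(S_1)=g(S_2)$. These relations cause terms such as $I(U_1,S_1;K\mid\cdot)$ and $I(U_2,S_2;K\mid\cdot)$ to collapse to $H(K\mid\cdot)$ when the other source is absent from the conditioning and to vanish when it is present. I anticipate the three differences to simplify to $H(K\mid Y_2,U_0)$ for \eqref{HC1}, to $H(K\mid Y_1,U_0)$ for \eqref{HC2} by the symmetric calculation, and to $0$ for \eqref{HC3} after a telescoping of four $K$-terms against $I(S_1,S_2;K,U_0)-I(S_1,S_2;U_0)$. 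All three are manifestly nonnegative, completing the argument.

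The main obstacle is purely computational: each difference is a sum of four or five conditional mutual informations, and the cancellations become visible only after the above identity and the $K$-collapse rules are applied in the correct order. No further conceptual input is required beyond recognizing, at the outset, that the same $(U_0,U_1,U_2)$ works in both theorems---so that no change of auxiliaries, beyond supplying the canonical $K$, is ever needed.
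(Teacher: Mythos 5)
Your proposal is correct and follows essentially the same route as the paper: one containment via the substitution $\tilde U_0=(U_0,K)$ (the paper's Remark~1), and the other by showing that the \emph{same} triple $(U_0,U_1,U_2)$ satisfies \eqref{HC1}--\eqref{HC3}, which is exactly what Appendix~A does. The only difference is bookkeeping---the paper expands each mutual information into entropies and invokes ``conditioning on $K$ reduces entropy,'' whereas you compute the exact gaps $H(K\mid Y_2,U_0)$, $H(K\mid Y_1,U_0)$, and $0$ (all of which check out)---so the two arguments are interchangeable.
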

\begin{proof}
See Appendix A.
\end{proof}

\emph{Remark 2:} No cardinality bounds are known for the auxiliary random variables $U_1$ and $U_2$.

\emph{Remark 3:}  Application of Theorem~\ref{thm1} yields the following results as special cases:
\begin{itemize}
  \item[a)] \textit{Marton's inner bound \cite{Marton}}: Consider the special case of independent sources described above, so take $S_1=(W_0,W_1)$ and $S_2=(W_0,W_2)$, with $W_0$, $W_1$, and $W_2$ of entropies $R_0$, $R_1$, $R_2$, respectively. By choosing $(U_0,U_1,U_2)$ to be independent of $(W_0,W_1,W_2)$, Theorem~\ref{thm1} yields Marton's inner bound, which state that the capacity region of the DM-BC $p(y_1,y_2|x)$ contains the set of rate triples $(R_0,R_1,R_2)$ satisfying
\begin{align*}
R_0 + R_1 &< I(U_0,U_1; Y_1), \\
R_0 + R_2 &< I(U_0,U_2; Y_2),\\
R_0 + R_1 + R_2 &< I(U_0,U_1; Y_1) + I(U_2; Y_2|U_0) \\
&  \quad - I(U_1;U_2|U_0),\\
R_0 + R_1 + R_2 &< I(U_1; Y_1|U_0) + I(U_0,U_2; Y_2) \\
&  \quad - I(U_1;U_2|U_0),\\
2R_0 + R_1 + R_2 &< I(U_0,U_1; Y_1) + I(U_0,U_2; Y_2) \\
&  \quad - I(U_1;U_2|U_0)
\end{align*}
for some $p(u_0, u_1, u_2,x)$.

\item[b)] \textit{The Gray--Wyner source coding problem~\cite{Gray}}:
  Consider the noiseless DM-BC channel with $Y_1 = (X_0,X_1)$, $Y_2 =
  (X_0,X_2)$ with links of rate $R_0=H(X_0)$, $R_1=H(X_1)$,
  $R_2=H(X_2)$. By taking $U_1 = X_1$, $U_2 = X_2$, and $U_0 =
  (X_0,V)$, under the distribution $p(v|s_1,s_2)p(x_0)p(x_1)p(x_2)$,
  the inequalities in Theorem~\ref{thm1} simplify to the following
  rate region
\begin{align*}
R_0 + R_1 &> I(S_1,S_2; V) + H(S_1|V), \\
R_0 + R_2 &> I(S_1,S_2; V) + H(S_2|V), \\
R_0 + R_1 + R_2 &> I(S_1,S_2; V) + H(S_1|V) + H(S_2|V), \\
2R_0 + R_1 + R_2 &> 2I(S_1,S_2; V) + H(S_1|V) \\
& \quad + H(S_2|V)
\end{align*}
which includes (and is in fact equivalent to) the Gray--Wyner distributed source coding region, which is characterized by the following set of inequalities:
\begin{align*}
R_0 &> I(S_1,S_2; V),  \\
R_1 &> H(S_1|V), \\
R_2 &> H(S_2|V).
\end{align*}

\item[c)] \textit{Degraded ``source'' sets}: Consider the case in
  which decoder 1 is interested in reconstructing both source
  sequences $(S_1^n,S_2^n)$ with vanishing error probability. This
  setup can be captured by considering transmission of a pair of
  sources $\tilde{S}_1=(S_1,S_2)$, $\tilde{S}_2=S_2$. By choosing
  $U_0=(U,S_2)$, $U_1=X$, and $U_2=\text{const.}$ with
  $p(u,x|s_1,s_2)=p(u,x)$, the conditions stated in the
  Theorem~\ref{thm1} reduce to
      \begin{equation}
\setlength\arraycolsep{0.2em}
  \begin{array}{rcl}
H(S_2) & \le & I(U; Y_2),  \\
H(S_1, S_2) & \le & I(U; Y_2) + I(X; Y_1|U), \\
H(S_1, S_2) & \le & I(X; Y_1).
\end{array}
\label{eq:k1}
\end{equation}
Conversely, \cite{KramerLiangShamai} shows that if the sources
$(S_1,S_2)$ can be reliably transmitted in this setup, then conditions
$\eqref{eq:k1}$ with ``$<$'' replaced by ``$\le$'' have to be
simultaneously satisfied for some $p(u,x)$.  Thus, we conclude that
the sufficient condition in Theorem \ref{thm1} is also necessary for
this class of problems. It should be remarked that the same
condition can also be obtained by separately performing
source and channel coding, first compressing the sources and then encoding the resulting sequences using
the channel coding scheme in~\cite{Korner}.
\item[d)] \textit{More capable broadcast channels}: A DM-BC is said to
  be more capable if $I(X;Y_1) \ge I(X;Y_2)$ for all $p(x)$
  (cf.~\cite{ElGamal2}). If we apply Theorem \ref{thm1} to this
  setting, and we choose $U_0=(U,S_2)$, $U_1=X$, and
  $U_2=\text{const.}$, with $p(u,x|s_1,s_2)=p(u,x)$, then the
  conditions stated in the theorem reduce to
  $\eqref{eq:k1}$. Conversely, it is shown by Kang and
  Kramer~\cite{KangKramer} that if the sources $(S_1,S_2)$ can be
  reliably transmitted in this setup, then conditions $\eqref{eq:k1}$
  with ``$<$'' replaced by ``$\le$'' have to be simultaneously
  satisfied for some $p(u,x)$. Hence, the sufficient condition in
  Theorem \ref{thm1} is also necessary for this class of problems, but
  again the same condition can be obtained by performing source and
  channel coding separately.
\end{itemize}

\section{Proof of Theorem \ref{thm1}}
The coding technique used to prove Theorem \ref{thm1} can be outlined as follows. The scheme requires three auxiliary random variables $(U_0,U_1,U_2)$. Information about the source sequence $S_1$ is carried by $U_0$ and $U_1$, while information about the source sequence $S_2$ is carried by $U_0$ and $U_2$. Although $U_0$ carries information about both source sequences, it is not treated as cloud center in our coding scheme, so an error in decoding $U_0$ does not preclude successful decoding of $S_1$ or $S_2$.  Marton's subcode generation technique is used to obtain arbitrary correlation among $(U_0,U_1,U_2)$.

The following definitions are needed for the remainder of
the paper.
\begin{defn}
Let $(x^n, y^n)$ be a pair of sequences with elements drawn from a pair of finite alphabets $(\mathcal{X},\mathcal{Y})$. Define their joint empirical probability mass function as
\begin{equation*}
\pi(x, y|x^n, y^n) \triangleq \frac{|\{i : (x_i, y_i) = (x, y)\}|}{n}, \text{ for } (x,y) \in \mathcal{X}\times \mathcal{Y}
\end{equation*}
Then, the set $\Ti_{\epsilon}^{(n)}(X,Y)$ (in short, $\Ti_{\epsilon}^{(n)}$) of jointly $\epsilon$-typical $n$-sequences $(x^n, y^n)$ is defined as:
\small
\begin{align*}
\Ti_{\epsilon}^{(n)}(X,Y) \triangleq \{(x^n, y^n): & |\pi(x, y|x^n, y^n) -p(x,y)| \le \epsilon p(x,y) \\
& \;  \text{ for all } (x,y) \in \mathcal{X}\times \mathcal{Y} \}.
\end{align*}
\normalsize
Similarly, the set $\Ti_{\epsilon}^{(n)}(Y|x^n) $ of $\epsilon$-typical $n$-sequences $y^n$ that are jointly typical with a given sequence $x^n$ is defined as:
\begin{align*}
\Ti_{\epsilon}^{(n)}(Y|x^n) \triangleq \{y^n:  (x^n, y^n) \in \Ti_{\epsilon}^{(n)}(X,Y)   \}.
\end{align*}
\end{defn}
In the remaining of this section, we first describe the random codebook generation and encoding-decoding scheme, then we outline the analysis of the probability of error, which is treated in detail in Appendix C.

\emph{Random codebook generation}:
Let $\epsilon^{\prime}>0$. Fix a joint distribution $p(u_0,u_1,u_2|s_1,s_2)$ and, without loss of generality, let $p(x|u_0,u_1,u_2,s_1,s_2)$ be a chosen deterministic function $x(s_1,s_2,u_0,u_1,u_2)$. Compute $p(u_0)$, $p(u_1|s_1)$, and $p(u_2|,s_2)$ for the given source distribution $p(s_1,s_2)$. Randomly and independently generate $2^{nR_0}$ sequences $u_0^n(m_0)$, $m_0 \in [1:2^{nR_0}]$, each according to $\prod_{i=1}^n p_{U_0}(u_{0i})$. For each source sequence $s_1^n$ randomly and independently generate $2^{nR_1}$ sequences $u_1^n(s_1^n,m_1)$, $m_1 \in [1:2^{nR_1}]$, according to $\prod_{i=1}^n p_{U_1|S_1}(u_{1i}|s_{1i})$. The same procedure, using $\prod_{i=1}^n p_{U_2|S_2}(u_{2i}|s_{2i})$, is repeated for generating $2^{nR_2}$ sequences $u_2^n(s_2^n,m_2)$, $m_2 \in [1:2^{nR_2}]$. The rates $(R_0,R_1,R_2)$ are chosen so that the ensemble of generated sequences $(u_0^n, u_1^n, u_2^n)$ ``cover'' the set $\Ti_{\epsilon^{\prime}}^{(n)}(U_0,U_1,U_2|s_1^n,s_2^n)$ for all $(s_1^n,s_2^n) \in \Ti_{\epsilon^{\prime}}^{(n)}(S_1,S_2)$. The conditions for ``covering'' are given in the following Lemma.
\begin{lem}
\label{lem}
Define the event
\begin{align*}
\mathcal{A}= & \{(S^n_1,S^n_2,U_0^n(m_0),U_1^n(S^n_1,m_1), \\
             &\qquad \qquad   U_2^n(S^n_2,m_2) \not \in \Ti_{\epsilon^{\prime}}^{(n)}
          \text{ for all } m_0, m_1, m_2\}.
\end{align*}
Then, there exists $\delta(\epsilon^{\prime})$ which tends to zero as $\epsilon^{\prime} \to 0$ such that $\P(\mathcal{A}) \to 0$ as $n \to \infty$
if the following inequalities are satisfied
\begin{align}
R_0 &> I(U_0; S_1,S_2) + \delta(\epsilon^{\prime}), \nonumber\\
R_1 &> I(U_1; S_2|S_1) + \delta(\epsilon^{\prime}), \nonumber\\
R_2 &> I(U_2; S_1|S_2)+ \delta(\epsilon^{\prime}), \nonumber\\
R_0 + R_1 &> I(U_0; S_1,S_2) + I(U_1; U_0,S_2|S_1) + \delta(\epsilon^{\prime}), \nonumber\\
R_0 + R_2 &> I(U_0; S_1,S_2) + I(U_2; U_0,S_1|S_2) + \delta(\epsilon^{\prime}), \nonumber\\
R_1 + R_2 &> I(U_1,S_1; U_2,S_2) - I(S_1; S_2) + \delta(\epsilon^{\prime}),  \nonumber\\
R_0 + R_1 + R_2 &> I(U_0; S_1,S_2) + I(U_1; U_2|U_0,S_1,S_2) \nonumber\\
&\quad + I(U_1; U_0,S_2|S_1) \nonumber\\
&\quad + I(U_2; U_0,S_1|S_2)  + \delta(\epsilon^{\prime}).
\label{eq:cov}
\end{align}
\end{lem}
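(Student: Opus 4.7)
The plan is to use the second moment method. For $(s_1^n, s_2^n) \in \Ti_{\epsilon^{\prime}}^{(n)}$, let
\[
N = \sum_{m_0, m_1, m_2} \mathbbm{1}\{(s_1^n, s_2^n, U_0^n(m_0), U_1^n(s_1^n, m_1), U_2^n(s_2^n, m_2)) \in \Ti_{\epsilon^{\prime}}^{(n)}\}
\]
count the triples whose codewords jointly cover $(s_1^n, s_2^n)$. Then $\P(\mathcal{A} \mid s_1^n, s_2^n) = \P(N=0 \mid s_1^n, s_2^n)$, and Chebyshev's inequality gives $\P(N=0) \le \mathbb{E}[N^2]/(\mathbb{E}[N])^2 - 1$. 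Since $(S_1^n, S_2^n) \in \Ti_{\epsilon^{\prime}}^{(n)}$ with probability tending to one by the law of large numbers, it suffices to show, uniformly over typical $(s_1^n, s_2^n)$, that $\mathbb{E}[N] \to \infty$ and $\mathbb{E}[N^2]/(\mathbb{E}[N])^2 \to 1$.

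\textbf{First moment.} The codebook is generated under the product measure $p(u_0)\,p(u_1|s_1)\,p(u_2|s_2)$. A routine application of the joint typicality lemma yields that, for each triple,
\[
\P\{(s_1^n, s_2^n, U_0^n(m_0), U_1^n(s_1^n,m_1), U_2^n(s_2^n,m_2)) \in \Ti_{\epsilon^{\prime}}^{(n)}\} = 2^{-n(\alpha \pm \delta(\epsilon^{\prime}))},
\]
with
\begin{align*}
\alpha = {} & I(U_0; S_1, S_2) + I(U_1; U_0, S_2 \mid S_1) \\
& {} + I(U_2; U_0, S_1 \mid S_2) + I(U_1; U_2 \mid U_0, S_1, S_2).
\end{align*}
The last inequality in the lemma is exactly $R_0 + R_1 + R_2 > \alpha + \delta(\epsilon^{\prime})$, which forces $\mathbb{E}[N] \to \infty$ exponentially.

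\textbf{Second moment.} Expand $\mathbb{E}[N^2]$ by partitioning the pairs $((m_0,m_1,m_2),(m_0',m_1',m_2'))$ according to the subset $T \subseteq \{0,1,2\}$ of coordinates on which they agree. Because the codewords within each layer are generated independently and, across layers, $U_0^n$ is independent of $(S_1^n, S_2^n)$ while $U_1^n$ depends on the source only through $S_1^n$ and $U_2^n$ only through $S_2^n$, the two joint-typicality events are conditionally independent given the shared codewords. A direct counting and entropy computation shows that the contribution from each $T$, divided by $(\mathbb{E}[N])^2$, is $2^{n \eta_T}$, where $\eta_T$ is a linear combination of $\{R_i : i \notin T\}$ and of mutual informations. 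The case $T = \emptyset$ contributes $1 + o(1)$, and the seven remaining cases recover exactly the seven rate inequalities in the lemma; for example, $T = \{0\}$ produces $R_0 > I(U_0; S_1, S_2) + \delta(\epsilon^{\prime})$, $T = \{1,2\}$ produces $R_1 + R_2 > I(U_1, S_1; U_2, S_2) - I(S_1;S_2) + \delta(\epsilon^{\prime})$, and $T = \{0,1\}$ produces $R_0 + R_1 > I(U_0; S_1, S_2) + I(U_1; U_0, S_2 \mid S_1) + \delta(\epsilon^{\prime})$.

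\textbf{Anticipated obstacle.} The main technical burden is the entropy bookkeeping across the seven subcases: each $\eta_T$ must be simplified via repeated chain-rule manipulations so that it collapses to the compact mutual-information expression appearing in the lemma. This reduction is made possible by the Markov conditions $U_0^n \perp (S_1^n, S_2^n)$ and $U_1^n - S_1^n - (U_0^n, S_2^n, U_2^n)$ (and symmetrically for $U_2^n$) enforced by the codebook generation, which cause a number of conditional mutual informations to vanish. Once these algebraic simplifications are in place, the remaining steps are standard applications of the joint typicality lemma and the law of large numbers.
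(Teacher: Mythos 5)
Your proposal is correct and follows essentially the same route as the paper's proof in Appendix~B: a second-moment (Chebyshev) bound on the number of covering triples, with $\mathsf{E}[N^2]$ partitioned into the eight cases according to which of the indices $(m_0,m_1,m_2)$ coincide, each cross term bounded via the joint typicality lemma, and the diagonal-free case canceling against $(\mathsf{E}[N])^2$. Your identification of the first-moment exponent and of which agreement pattern yields which of the seven rate inequalities matches the paper's computation.
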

%\begin{lem}
%\label{lem}
%For each $(s_1^n,s_2^n) \in \Ti_{\epsilon^{\prime}}^{(n)}(S_1,S_2)$, let $u_0^n(m_0)$, $u_1^n(s_1^n,m_1)$ and $u_2^n(s_2^n,m_2)$, $(m_0,m_1,m_2) \in [1:2^{nR_0}]\times [1:2^{nR_1}] \times [1:2^{nR_2}]$, be generated as described above. Then, there exists $\delta(\epsilon)$ which tends to zero as $\epsilon \to 0$ such that
%\begin{align*}
%\P( & S^n_1,S^n_2,U_0^n(m_0),U_1^n(S^n_1,m_1),U_2^n(S^n_2,m_2) \not \in  \Ti_{\epsilon^{\prime}}^{(n)} \text{ for  all }  \nonumber  \\
%&  (m_0,m_1,m_2) \in [1:2^{nR_0}]\times [1:2^{nR_1}] \times [1:2^{nR_2}] \} \to 0
%\end{align*}
%as $n \to \infty$
%if the following inequalities are satisfied
%\begin{equation}
%\label{eq:cov}
%\setlength\arraycolsep{0.2em}
%  \begin{array}{rcl}
%R_0 &>& I(U_0; S_1,S_2) + \delta(\epsilon^{\prime}), \\
%R_1 &>& I(U_1; S_2|S_1) + \delta(\epsilon^{\prime}), \\
%R_2 &>& I(U_2; S_1|S_2)+ \delta(\epsilon^{\prime}), \\
%R_0 + R_1 &>& I(U_0; S_1,S_2) + I(U_1; U_0,S_2|S_1) + \delta(\epsilon^{\prime}), \\
%R_0 + R_2 &>& I(U_0; S_1,S_2) + I(U_2; U_0,S_1|S_2) + \delta(\epsilon^{\prime}), \\
%R_1 + R_2 &>& I(U_1,S_1; U_2,S_2) - I(S_1; S_2) + \delta(\epsilon^{\prime}),  \\
%R_0 + R_1 + R_2 &>& I(U_0; S_1,S_2) + I(U_1; U_2|U_0,S_1,S_2) \\
%& & + I(U_1; U_0,S_2|S_1) + I(U_2; U_0,S_1|S_2)   \\
%& & + \delta(\epsilon^{\prime}).
%\end{array}
%\end{equation}
%\end{lem}
\begin{proof}
See Appendix B.
\end{proof}

\emph{Encoding}:
For each $(s_1^n, s_2^n) \in \aep(S_1,S_2)$, choose a triple $(m_0,m_1,m_2)\in [1:2^{nR_0}]\times [1:2^{nR_1}] \times [1:2^{nR_2}]$ such that $(s_1^n,s_2^n,u_0^n(m_0),u_1^n(s_1^n,m_1),u_2^n(s_2^n,m_2)) \in \Ti_{\epsilon^{\prime}}^{(n)}( S_1,S_2,U_0, U_1, U_2)$. Choose $(m_0,m_1,m_2)=(1,1,1)$ if no such triple can be found. Then, at time $ i\in\{1,\dotsc,n\}$, the encoder transmits $x_i =  x(s_{1i},s_{2i},u_{0i}(m_0),u_{1i}(s_1^n,m_1),u_{2i}(s_2^n,m_2))$. The sequence $x^n$ so generated is the codeword corresponding to the source sequence $(s_1^n, s_2^n)$.

\emph{Decoding}: Let $\epsilon > \epsilon^{\prime}$. Upon observing the sequence $y_1^n$, decoder 1 declares that $s_1^n$ is the transmitted source sequence if this is the unique sequence such that $(s_1^n, u_0^n(m_0), u_1^n(s_1^n,m_1), y_1^n) \in \Ti_{\epsilon}^{(n)}(S_1,U_0,U_1,Y_1)$ for some $m_0 \in [1:2^{nR_0}]$ and $m_1 \in [1:2^{nR_1}]$. Otherwise, declare an error. Similarly, the decoder observing $y_2^n$ declares that $s_2^n$ was sent, if this is the unique sequence such that $(s_2^n, u_0^n(m_0), u_2^n(s_2^n,m_2), y_2^n)\in \Ti_{\epsilon}^{(n)}(S_2,U_0,U_2,Y_2)$ for some $m_0 \in [1:2^{nR_0}]$ and $m_2 \in [1:2^{nR_2}]$.

\emph{Error events}: The detailed analysis of the error probability is
given in Appendix C, but it can be outlined as follows. If the
``covering'' conditions in Lemma \ref{lem} are satisfied, then the
probability of error encoding can be made arbitrarily small by letting
$n \to \infty$. Thus, we focus on the analysis of decoding
errors. Assume that $(m_0,m_1,m_2)=(M_0,M_1,M_2)$ is selected for the
source sequence $(s_1^n, s_2^n)$. The error event for decoder 1 can be
divided into two parts:
\begin{enumerate}
\item There exists a sequence $\hat{s}_1^n \neq S_1^n$ such that $(\hat{s}_1^n, u_0^n(M_0), u_1^n(\hat{s}_1^n,m_1), y_1^n)\in \Ti_{\epsilon}^{(n)}(S_1,U_0,U_1,Y_1) $ for some $m_1 \in [1:2^{nR_1}]$. The probability of this event vanishes as $n \to \infty$ if the following inequality is satisfied
\begin{align}
\label{dec1}
H(S_1) + R_1 &< I(U_1,S_1; U_0,Y_1) - 2\delta(\epsilon).
\end{align}

  \item  There exists a sequence $\hat{s}_1^n \neq S_1^n$ and an $m_0 \neq M_0$ such that $(\hat{s}_1^n, u_0^n(m_0), u_1^n(\hat{s}_1^n,m_1), y_1^n)\in \Ti_{\epsilon}^{(n)}(S_1,U_0,U_1,Y_1) $ for some $m_1 \in [1:2^{nR_1}]$. The probability of this event can be made arbitrarily small as $n \to \infty$ if
\begin{align}
H(S_1) + R_0 + R_1 &< I(U_0,U_1,S_1; Y_1)  \nonumber \\
 & \quad + I(U_0; U_1,S_1) - 3\delta(\epsilon).
\label{dec2}
\end{align}
\end{enumerate}
Similarly, the probability of error for decoder 2 can be made arbitrarily small as $n$ tends to infinity if
\begin{align}
H(S_2) + R_2 &< I(U_2,S_2; U_0,Y_2) - 2\delta(\epsilon), \label{dec3} \\
H(S_2) + R_0 + R_2 &< I(U_0,U_2,S_2; Y_2) \nonumber \\
 & \quad + I(U_0; U_2,S_2) - 3\delta(\epsilon). \label{dec4}
\end{align}

\emph{Fourier--Motzkin elimination}: In summary, by combining
the inequalities in Lemma \ref{lem},~\eqref{dec1},~\eqref{dec2},~\eqref{dec3}, and
\eqref{dec4}, and by letting $\epsilon^{\prime},\epsilon \to 0$, we
conclude that the sources $(S_1,S_2)$ can be reliably transmitted over
the DM-BC $p(y_1,y_2|x)$ if there exists a rate tuple $(R_0,R_1,R_2)$
satisfying
\small
\begin{align}
R_0 &> I(U_0; S_1,S_2) \triangleq v_1,\nonumber\\
R_1 &> I(U_1; S_2|S_1) \triangleq v_2,\nonumber\\
R_2 &> I(U_2; S_1|S_2) \triangleq v_3,\nonumber\\
R_0 + R_1 &> I(U_0; S_1,S_2) + I(U_1; U_0,S_2|S_1) \triangleq v_4,\nonumber\\
R_0 + R_2 &> I(U_0; S_1,S_2) + I(U_2; U_0,S_1|S_2)\triangleq v_5,\nonumber\\
R_1 + R_2 &> I(U_1,S_1; U_2,S_2) - I(S_1; S_2)\triangleq v_6, \nonumber\\
R_0 + R_1 + R_2 &>  I(U_0; S_1,S_2)
 + I(U_1; U_2|U_0,S_1,S_2) \nonumber\\
&\quad  + I(U_1; U_0,S_2|S_1) \nonumber\\
&\quad  + I(U_2; U_0,S_1|S_2)  \triangleq v_7. \nonumber\\
H(S_1) + R_1 &< I(U_1,S_1; U_0,Y_1)\triangleq v_8, \nonumber\\
H(S_1) + R_0 + R_1 &< I(U_0,U_1,S_1; Y_1) + I(U_0; U_1,S_1) \triangleq v_9,\nonumber\\
H(S_2) + R_2 &< I(U_2,S_2; U_0,Y_2) \triangleq v_{10}, \nonumber\\
H(S_2) + R_0 + R_2 &< I(U_0,U_2,S_2; Y_2) + I(U_0; U_2,S_2) \triangleq v_{11}
\label{eq:s1}
\end{align}
\normalsize
for some $p(u_0,u_1,u_2,x|s_1,s_2)$. The final step in our derivation consists of eliminating the auxiliary variables $R_0$, $R_1,$ $R_2$ from the above set of inequalities. This routine task can be performed by application of the standard Fourier--Motzkin elimination algorithm~\cite{HanKobayashi}~\cite{Ziegler}. Unfortunately, brute force application of this algorithm to eliminate $R_0$, $R_1,$ $R_2$ (in this order) results in 28 inequalities involving $H(S_1)$, $H(S_2),$ and $H(S_1, S_2)$, most of which are redundant because implied by other inequalities. Also, to find a minimal set of non-redundant inequalities, one has to verify linear information inequalities involving several random variables. Therefore, although conceptually simple, application of the Fourier--Motzkin elimination algorithm can be tedious.

We would like to illustrate a technique that allow us to get around these difficulties and to efficiently perform the elimination using a computer program.  The key idea is to treat the right hand sides of \eqref{eq:s1} as auxiliary variables. Denote them by $v_1,\dotsc,v_{11}$. Next, observe that
\begin{equation}
\label{eq:s3}
\setlength\arraycolsep{0.2em}
  \begin{array}{rcl}
v_1 &\ge & 0, \\
v_2 &\ge & 0, \\
v_3 &\ge & 0, \\
v_1 + v_2 &\le & v_4, \\
v_1 + v_3 &\le & v_5, \\
v_2 + v_3 &\le & v_6, \\
v_4 + v_6 &\le & v_2 + v_7,\\
v_4 + v_5 &\le & v_1 + v_7, \\
v_5 + v_6 &\le & v_3 + v_7,  \\
v_8 & \le & v_9, \\
v_{10} & \le & v_{11}.
  \end{array}
\end{equation}
The above information inequalities can be checked directly using the chain rule and the nonnegativity of mutual information or, alternatively, they can be verified by the software ITIP~\cite{ITIP}.

Combining \eqref{eq:s1} and \eqref{eq:s3} we obtain a system of linear inequalities involving the variables $(H(S_1),H(S_2),R_0,R_1,R_2,v_1,v_2,\dotsc,v_{11})$. Next, we eliminate the auxiliary variables $R_0$, $R_1$ and $R_2$ from this system of linear inequalities. Since \eqref{eq:s1} and \eqref{eq:s3} have constant coefficients, the  Fourier--Motzkin elimination can now be performed by a computer program, e.g., by the software PORTA~\cite{PORTA}. The algorithm results in the following inequalities involving $H(S_1)$ and $H(S_2)$:
\begin{align}
H(S_1) &< v_9 -v_4, \nonumber\\
H(S_1) &< v_8 -v_2, \nonumber\\
H(S_2) &< v_{11} - v_5, \nonumber\\
H(S_2) &< v_{10} - v_3, \nonumber\\
H(S_1)+H(S_2) &< v_9+v_{10}-v_7, \nonumber\\
H(S_1)+H(S_2) &< v_8+v_{11}-v_7, \nonumber\\
H(S_1)+H(S_2) &< v_8+v_{10}-v_6, \nonumber\\
H(S_1)+H(S_2) &< v_9+v_{11}-v_7-v_1.
\label{eq:fm}
\end{align}
To complete the proof, we argue that three of the above inequalities can be discarded because inactive. We proceed as follows. Define three new auxiliary random variables $\tilde{U}_i = (U_i, W)$, $i \in \{0,1,2\}$, where $W$ is chosen independent of everything else, and replace $U_i$ with $\tilde{U}_i$ in \eqref{eq:s1}. Substituting the new values of $v_1,\dotsc,v_{11}$, and performing some manipulations, we can rewrite \eqref{eq:fm} as follows
\begin{align}
H(S_1)    & <  I(U_0, U_1, S_1; Y_1) - I(U_0, U_1; S_2|S_1),                         \nonumber  \\
H(S_1)     & <  I(U_1,S_1; U_0,Y_1) \nonumber  \\
& \quad - I(U_1; S_2|S_1) + H(W),            \label{eq:el1}  \\
H(S_2)    & <   I(U_0, U_2, S_2; Y_2) - I(U_0, U_2; S_1|S_2), \nonumber  \\
H(S_2)    & <  I(U_2,S_2; U_0,Y_2) \nonumber  \\
&  \quad  - I(U_2; S_1|S_2) + H(W), \label{eq:el2}  \\
H(S_1,S_2)  &< I(U_0, U_1, S_1; Y_1) + I(U_2, S_2; Y_2|U_0), \nonumber \\
&  \quad               - I(U_1, S_1; U_2, S_2|U_0), \nonumber \\
H(S_1,S_2)  &< I(U_1, S_1;Y_1|U_0) + I(U_0, U_2, S_2; Y_2) \nonumber \\
&  \quad               - I(U_1, S_1; U_2, S_2|U_0), \nonumber  \\
H(S_1,S_2)& <  I(U_2,S_2; U_0,Y_2) + I(U_1,S_1; U_0,Y_1) \nonumber \\
&  \quad   - I(U_1,S_1; U_2,S_2) + H(W), \label{eq:el3} \\
H(S_1,S_2)  & <  I(U_0, U_1, S_1; Y_1) + I(U_0, U_2, S_2; Y_2) \nonumber \\
&  \quad      - I(U_1, S_1; U_2, S_2|U_0) - I(S_1, S_2; U_0). \nonumber
\end{align}
By letting $H(W) \to \infty$, we can make \eqref{eq:el1},
\eqref{eq:el2} and \eqref{eq:el3} trivial, and thus inactive, while
leaving the remaining inequalities unchanged. In summary, the five
inequalities remaining from the Fourier--Motzkin elimination are those
appearing in the statement of Theorem \ref{thm1}. This proves the
desired sufficiency.

\section{Discussion}

The best known inner bound for the problem of broadcasting correlated
sources over a general DM-BC is due to Han and Costa with the
associated coding scheme that cleverly combines a m\'{e}lange of
coding techniques, including joint typicality encoding and decoding,
random hashing, superposition coding, and the use of the common part
between two random variables. To investigate which techniques are
crucial, this paper presents a ``minimalistic'' coding scheme in which
we remove from the coding scheme of Han and Costa all unnecessary
components without affecting the overall performance. Our proposed
coding scheme does not require random hashing and superposition
coding, and it does not involve the common part of two random
variables. This highlights how source encoding can be performed by
simply jointly ``covering'' the set of typical sources using auxiliary
correlated random variables. An interesting implication of our result,
albeit no more than theoretically amusing, is that the capacity of the
degraded DM-BC can be achieved without employing Cover's superposition
coding~\cite{Cover_sc}.

We would like to conclude the paper by showing that the sufficient
condition stated in Theorem~\ref{thm1} can also be proved by means of
a more elaborate coding scheme which uses superposition coding in
addition to joint typicality encoding and decoding. This alternative
scheme differs from the one described in Section~IV in the way the
random codebook is generated, as the auxiliary random variable $U_0$
serves as a cloud center for generating the auxiliary random variables
$U_1$ and $U_2$. Appendix D presents a description of the code
construction and a sketch of the analysis of the associated
probability of error.

\appendices

\section{Proof of Proposition \ref{prop}}
\label{sec:appendix1}

Suppose that the source  $(S_1,S_2)$ satisfies the sufficient
condition stated in Theorem \ref{thm1}, so that the inequalities
 in Theorem \ref{thm1} are satisfied for
some triple $(U_0,U_1,U_2)$. The goal is to show that $(S_1,S_2)$ also
satisfy the inequalities in Theorem \ref{thmHC} for the same triple of
auxiliary random variables. Expanding the left hand side of
\eqref{KM1}, we obtain that
%%{\small
\begin{align*}
& H(S_1, S_2) \nonumber \\
&\; < I(U_0, U_1, S_1; Y_1) + I(U_2, S_2; Y_2|U_0) \nonumber \\
&\quad  - I(U_1, S_1; U_2, S_2|U_0), \\
&\; = H(Y_1) - H(Y_1|U_0,U_1,S_1) -  H(S_2,U_2|Y_2,U_0) \nonumber \\
&\; \quad  + H(S_2,U_2|U_0,U_1,S_1) \nonumber \\
&\;\leq H(Y_1) - H(Y_1|K,U_0,U_1,S_1) -  H(S_2,U_2|Y_2,K,U_0) \nonumber \\
&\; \quad  + H(S_2,U_2|K,U_0,U_1,S_1), \\
&\; = I(K,U_0, U_1, S_1; Y_1) + I(U_2, S_2; Y_2|K,U_0) \nonumber \\
&\;  \quad               - I(U_1, S_1; U_2, S_2|K,U_0),
\end{align*}
%%}
%%{\normalsize
where the second inequality follows from the fact that conditioning reduces the entropy and that $K=f(S_1)=g(S_2)$ is a deterministic function of the sources $(S_1,S_2)$. Thus, $(S_1,S_2)$ satisfy \eqref{HC1}. Proceeding in a similar way, it is immediate to show that $(S_1,S_2)$ satisfy \eqref{HC2}. Finally, we expand the left hand side of \eqref{KM3} and obtain that
%%}
%%{\small
\begin{align*}
& H(S_1, S_2) \\
& < I(U_0, U_1, S_1; Y_1) + I(U_0, U_2, S_2; Y_2)  \nonumber \\
& \quad  - I(U_1, S_1; U_2, S_2|U_0) - I(S_1, S_2; U_0). \nonumber \\
& = H(Y_1) - H(Y_1|U_0,U_1,S_1) -  H(Y_2) - H(Y_2|U_0,U_2,S_2) \nonumber \\
& \quad   - H(U_1|S_1,U_0) + H(U_1,S_1|U_0,U_2,S_2) - H(S_1) \nonumber \\
& \leq H(Y_1) - H(Y_1|K,U_0,U_1,S_1) -  H(Y_2) \nonumber \\
&\quad - H(Y_2|K,U_0,U_2,S_2) - H(U_1|S_1,K,U_0) \nonumber \\
&\quad + H(U_1,S_1|K,U_0,U_2,S_2) - H(S_1) \\
& = I(K,U_0, U_1, S_1; Y_1) + I(K,U_0, U_2, S_2; Y_2)   \nonumber \\
&  \quad   - I(U_1, S_1; U_2, S_2|K,U_0) - I(S_1, S_2; K,U_0), \nonumber
\end{align*}
%%}
so the sources $(S_1,S_2)$ satisfy \eqref{HC3}. Thus, we conclude that
$(S_1,S_2)$ also satisfy the sufficient condition stated in
Theorem \ref{thmHC}.

\section{Proof of Lemma \ref{lem}}
\label{sec:appendix2}
The technique used in the proof is similar to the one in~\cite{ElGamal}. For each $(s_1^n,s_2^n)\in \Ti_{\epsilon^{\prime}}^{(n)}(S_1,S_2)$, let
\begin{align*}
&\mathcal{A}(s_1^n,s_2^n) \\
& \triangleq \{ (m_0,m_1,m_2) \in [1:2^{nR_0}]\times [1:2^{nR_1}] \times [1:2^{nR_2}]: \\
& \quad  (s^n_1,s^n_2,U_0^n(m_0),U_1^n(s^n_1,m_1),U_2^n(s^n_2,m_2)) \in \Ti_{\epsilon^{\prime}}^{(n)} \}.
\end{align*}
Then,
\begin{align*}
\P(\mathcal{A}) & \le \P((s_1^n,s_2^n) \not \in \Ti_{\epsilon^{\prime}}^{(n)}) \\
 & \quad +\sum_{(s_1^n,s_2^n)\in \Ti_{\epsilon^{\prime}}^{(n)}} p(s_1^n,s_2^n) \P( |\mathcal{A}(s_1^n,s_2^n)|=0).
\end{align*}
By the law of large number the first term in the sum tends to 0 as $n \to \infty$. To bound the second term, observe that
\begin{align}
\P(|\mathcal{A}(s_1^n,s_2^n)|=0) \le  \frac{\text{Var}( |\mathcal{A}(s_1^n,s_2^n)|  )}{(\mathsf{E}[|\mathcal{A}(s_1^n,s_2^n)|]) ^2}.
\label{eq:ineq}
\end{align}
%\begin{align}
%&\P(|\mathcal{A}(s_1^n,s_2^n)|=0) \\
%& \le  \P(|\mathcal{A}(s_1^n,s_2^n)|=0) + \P(|\mathcal{A}| \ge 2 \mathbb{E}|\mathcal{A}(s_1^n,s_2^n)| )\P(|\mathcal{A}(s_1^n,s_2^n)|>0) \nonumber \\
%& =  \P( (|\mathcal{A}(s_1^n,s_2^n)| - \mathbb{E}|\mathcal{A}(s_1^n,s_2^n)| )^2  \ge (\mathbb{E}|\mathcal{A}(s_1^n,s_2^n)| )^2) \nonumber \\
%& \le  \frac{\text{Var}( |\mathcal{A}(s_1^n,s_2^n)|  )}{(\mathbb{E}|\mathcal{A}(s_1^n,s_2^n)| )^2}.
%\label{eq:ineq}
%\end{align}
Using indicator random variables, we express $|\mathcal{A}(s_1^n,s_2^n)|$ as
\begin{align*}
|\mathcal{A}(s_1^n,s_2^n)| = \sum_{m_0,m_1,m_2} \mathbbm{1}_{\mathcal{A}_{m_0,m_1,m_2}}
\end{align*}
where
\begin{align*}
\mathcal{A}_{m_0,m_1,m_2} & \triangleq \{ (s^n_1,s^n_2,U_0^n(m_0),U_1^n(s^n_1,m_1),U_2^n(s^n_2,m_2)) \\
& \quad \quad \in \Ti_{\epsilon^{\prime}}^{(n)}\}.
\end{align*}
Then, taking expectations and using the fact that codewords are independently generated,
\begin{align}
\label{eq:mean}
\mathsf{E} [|\mathcal{A}(s_1^n,s_2^n)|] & = \sum_{m_0,m_1,m_2} \P(\mathcal{A}_{m_0,m_1,m_2})  \nonumber \\
& = 2^{n(R_0+R_1+R_2)} \P(\mathcal{A}_{1,1,1})
\end{align}
Similarly, we have
\small
\begin{align*}
& \mathsf{E} [|\mathcal{A}(s_1^n,s_2^n)|^2] \\
& = \sum_{m_0,m_1,m_2} \P(\mathcal{A}_{m_0,m_1,m_2}) \\
& \quad + \sum_{m_0,m_1,m_2} \sum_{m_0^\prime \not = m_0} \P(\mathcal{A}_{m_0,m_1,m_2},\mathcal{A}_{m_0^\prime,m_1,m_2}) \\
& \quad  + \sum_{m_0,m_1,m_2} \sum_{m_1^\prime \not = m_1} \P(\mathcal{A}_{m_0,m_1,m_2},\mathcal{A}_{m_0,m_1^\prime,m_2}) \\
& \quad  + \sum_{m_0,m_1,m_2} \sum_{m_2^\prime \not = m_2} \P(\mathcal{A}_{m_0,m_1,m_2},\mathcal{A}_{m_0,m_1,m_2^\prime}) \\
& \quad  + \sum_{m_0,m_1,m_2} \sum_{ \substack{m_0^\prime \not = m_0 \\ m_1^\prime \not = m_1} } \P(\mathcal{A}_{m_0,m_1,m_2},\mathcal{A}_{m_0^\prime,m_1^\prime,m_2}) \\
& \quad  + \sum_{m_0,m_1,m_2} \sum_{ \substack{m_0^\prime \not = m_0 \\ m_2^\prime \not = m_2} } \P(\mathcal{A}_{m_0,m_1,m_2},\mathcal{A}_{m_0^\prime,m_1,m_2^\prime}) \\
& \quad  + \sum_{m_0,m_1,m_2} \sum_{ \substack{m_1^\prime \not = m_1 \\ m_2^\prime \not = m_2} } \P(\mathcal{A}_{m_0,m_1,m_2},\mathcal{A}_{m_0,m_1^\prime,m_2^\prime}) \\
& \quad  + \sum_{m_0,m_1,m_2} \sum_{ \substack{m_0^\prime \not = m_0 \\ m_1^\prime \not = m_1 \\ m_2^\prime \not = m_2 }} \P(\mathcal{A}_{m_0,m_1,m_2},\mathcal{A}_{m_0^\prime,m_1^\prime,m_2^\prime}).
\end{align*}
\normalsize
Since the codewords are independently generated,  we can re-write the above equality as follows
\begin{align*}
\mathsf{E} |\mathcal{A}|^2 & \le 2^{n(R_0+R_1+R_2)} \P(\mathcal{A}_{1,1,1}) \\
& \quad + 2^{n(2R_0+R_1+R_2)} \P(\mathcal{A}_{1,1,1},\mathcal{A}_{2,1,1}) \\
& \quad  + 2^{n(R_0+2R_1+R_2)} \P(\mathcal{A}_{1,1,1},\mathcal{A}_{1,2,1}) \\
& \quad  + 2^{n(R_0+R_1+2R_2)} \P(\mathcal{A}_{1,1,1},\mathcal{A}_{1,1,2}) \\
& \quad  + 2^{n(2R_0+2R_1+R_2)} \P(\mathcal{A}_{1,1,1},\mathcal{A}_{2,2,1}) \\
& \quad  + 2^{n(2R_0+R_1+2R_2)} \P(\mathcal{A}_{1,1,1},\mathcal{A}_{2,1,2}) \\
& \quad  + 2^{n(R_0+2R_1+2R_2)} \P(\mathcal{A}_{1,1,1},\mathcal{A}_{1,2,2}) \\
& \quad  + 2^{n(2R_0+2R_1+2R_2)} \P(\mathcal{A}_{1,1,1},\mathcal{A}_{2,2,2}).
\end{align*}
It is easily seen that $\P(\mathcal{A}_{1,1,1},\mathcal{A}_{2,2,2})=\P(\mathcal{A}_{1,1,1})^2$. It follows that
\begin{align}
\text{Var}(|\mathcal{A}|) & \le 2^{n(R_0+R_1+R_2)} \P(\mathcal{A}_{1,1,1}) \nonumber \\
& \quad  + 2^{n(2R_0+R_1+R_2)} \P(\mathcal{A}_{1,1,1},\mathcal{A}_{2,1,1}) \nonumber  \\
& \quad  + 2^{n(R_0+2R_1+R_2)} \P(\mathcal{A}_{1,1,1},\mathcal{A}_{1,2,1}) \nonumber  \\
& \quad  + 2^{n(R_0+R_1+2R_2)} \P(\mathcal{A}_{1,1,1},\mathcal{A}_{1,1,2}) \nonumber  \\
& \quad  + 2^{n(2R_0+2R_1+R_2)} \P(\mathcal{A}_{1,1,1},\mathcal{A}_{2,2,1}) \nonumber  \\
& \quad  + 2^{n(2R_0+R_1+2R_2)} \P(\mathcal{A}_{1,1,1},\mathcal{A}_{2,1,2}) \nonumber  \\
& \quad  + 2^{n(R_0+2R_1+2R_2)} \P(\mathcal{A}_{1,1,1},\mathcal{A}_{1,2,2}).
\label{eq:var}
\end{align}
By the joint typicality lemma~\cite{ElGamalKim}, we have
\begin{align*}
\P(\mathcal{A}_{2,1,1},\mathcal{A}_{1,1,1}) & \le 2^{-n [I(U_0;S_1,S_2,U_1,U_2) +I(U_1;U_0,S_2|S_1) ]} \nonumber \\
& \quad \cdot 2^{-n[ I(U_0;S_1,S_2)+ I(U_2;U_0,U_1,S_1|S_2)] - \delta(\epsilon^{\prime}) ] },  \\
%%%
\P(\mathcal{A}_{1,2,1},\mathcal{A}_{1,1,1}) & \le 2^{-n [I(U_1;S_2,U_0,U_2|S_1) +I(U_1;U_0,S_2|S_1) ]} \nonumber \\
& \quad \cdot 2^{-n[ I(U_0;S_1,S_2)+ I(U_2;U_0,U_1,S_1|S_2)] - \delta(\epsilon^{\prime}) ] },  \\
%%%
\P(\mathcal{A}_{1,1,2},\mathcal{A}_{1,1,1}) & \le 2^{-n [I(U_2;S_1,U_0,U_1|S_2) +I(U_1;U_0,S_2|S_1) ]} \nonumber \\
& \quad \cdot 2^{-n[ I(U_0;S_1,S_2)+ I(U_2;U_0,U_1,S_1|S_2)] - \delta(\epsilon^{\prime}) ] },  \\
%%%
\P(\mathcal{A}_{2,2,1},\mathcal{A}_{1,1,1}) &\le 2^{-n [I(U_0;S_1,S_2,U_1,U_2) + I(U_1;S_2,U_2|S_1)] } \nonumber \\
& \quad \cdot 2^{-n[ I(U_1;U_0,S_2|S_1)  + I(U_0;S_1,S_2) ] }  \\
& \quad \cdot 2^{-n[ I(U_2;U_0,U_1,S_1|S_2)] - \delta(\epsilon^{\prime}) ]},  \\
%%%
\P(\mathcal{A}_{2,1,2},\mathcal{A}_{1,1,1}) &\le 2^{-n [I(U_0;S_1,S_2,U_1,U_2) + I(U_2;S_1,U_1|S_2) ]} \nonumber \\
& \quad \cdot 2^{-n[ I(U_1;U_0,S_2|S_1)  + I(U_0;S_1,S_2) ] }  \\
& \quad \cdot 2^{-n[ I(U_2;U_0,U_1,S_1|S_2)] - \delta(\epsilon^{\prime}) ]},  \\
%%%
\P(\mathcal{A}_{1,2,2},\mathcal{A}_{1,1,1}) &\le 2^{-n [I(U_1;U_0,S_2|S_1) + I(U_2;U_0,U_1,S_1|S_2) ]} \nonumber \\
& \quad \cdot 2^{-n[ I(U_1;U_0,S_2|S_1)  + I(U_0;S_1,S_2) ] }  \\
& \quad \cdot 2^{-n[ I(U_2;U_0,U_1,S_1|S_2)] - \delta(\epsilon^{\prime}) ]},\\
%%\end{align*}
%%and
%%\begin{align*}
\P(\mathcal{A}_{1,1,1}) & \ge 2^{-n [I(U_0;S_1,S_2) +I(U_1;U_0,S_2|S_1) } \nonumber \\
& \quad \cdot 2^{-n[ + I(U_2;U_0,U_1,S_1|S_2)] + \delta(\epsilon^{\prime}) ] }.
\end{align*}
%\begin{align*}
%\P(\mathcal{A}_{2,1,1}|\mathcal{A}_{1,1,1}) & \le 2^{n [H(U_0|S_1,S_2,U_1,U_2) - H(U_0) -\delta(\epsilon^{\prime}) ] } \\
%\P(\mathcal{A}_{1,2,1}|\mathcal{A}_{1,1,1}) & \le 2^{n[ H(U_1|S_1,S_2,U_0,U_2) -H(U_1|S_1) -\delta(\epsilon^{\prime}) ]} \\
%\P(\mathcal{A}_{1,1,2}|\mathcal{A}_{1,1,1}) & \le 2^{n[ H(U_2|S_1,S_2,U_0,U_1) -H(U_2|S_2) -\delta(\epsilon^{\prime}) ]} \\
%\P(\mathcal{A}_{2,2,1}|\mathcal{A}_{1,1,1}) &\le 2^{n[ H(U_0,U_1|S_1,S_2,U_2) -H(U_0) ]} \nonumber \\
%& \quad \cdot 2^{n[ - H(U_1|S_1) -\delta(\epsilon^{\prime}) ]}  \\
%\P(\mathcal{A}_{2,1,2}|\mathcal{A}_{1,1,1}) &\le 2^{n[ H(U_0,U_2|S_1,S_2,U_1) -H(U_0) ]} \nonumber \\
%& \quad \cdot 2^{n[ - H(U_2|S_2) -\delta(\epsilon^{\prime}) ]}  \\
%\P(\mathcal{A}_{1,2,2}|\mathcal{A}_{1,1,1}) &\le 2^{n[ H(U_1,U_2|S_1,S_2,U_0) -H(U_1|S_1) ]} \nonumber \\
%& \quad \cdot 2^{n[ - H(U_2|S_2) -\delta(\epsilon^{\prime}) ]}
%\end{align*}
%and
%\begin{align*}
%\P(\mathcal{A}_{1,1,1}) & \ge 2^{n [H(U_0,U_1,U_2|S_1,S_2) - H(U_0) -H(U_1|S_1) ]} \nonumber \\
%& \quad \cdot 2^{n[ - H(U_2|S_2) +\delta(\epsilon^{\prime}) ] }.
%\end{align*}
Substituting the above inequalities into \eqref{eq:mean} and \eqref{eq:var}, and making use of \eqref{eq:ineq}, we obtain that $\P(|\mathcal{A}(s_1^n,s_2^n)|=0) \rightarrow 0$ as $n \rightarrow \infty$ if  conditions \eqref{eq:cov} are simultaneously satisfied. This completes the proof of the lemma.

\section{Analysis of the Probability of Error}
\label{sec:appendix3}

Assume that $(m_0,m_1,m_2)=(M_0,M_1,M_2)$ for the transmitted source sequences  $(s_1^n, s_2^n)$.  To study the error probability for decoder 1, define the events
\begin{align}
\mathcal{E}_1 & \triangleq \{ (S_1^n,U_0^n(M_0),X_1^n(S_1^n,M_0),Y_1^n)  \not \in   \Ti_\epsilon^{(n)} \}, \nonumber \\
\mathcal{E}_2 & \triangleq \{ (\tilde{s}_1^n,U_0^n(M_0),U_1^n(\tilde{s}_1^n,m_1),Y_1^n) \in   \Ti_\epsilon^{(n)}  \nonumber \\
 & \quad \quad  \text{ for some } \tilde{s}_1^n \not = S_1^n \text{ and } m_1\in[1:2^{nR_1}] \}, \nonumber \\
\mathcal{E}_3 & \triangleq \{ (\tilde{s}_1^n,U_0^n(m_0),U_1^n(\tilde{s}_1^n,m_1),Y_1^n) \in   \Ti_\epsilon^{(n)}  \nonumber \\
 & \quad \quad  \text{ for some } \tilde{s}_1^n \not = S_1^n, m_0 \not = M_0, \text{ and } m_1\in[1:2^{nR_1}] \}. \nonumber
\end{align}
The average probability of decoding error at decoder 1 is bounded by
\begin{equation*}
\P(\hat{S}_1^n \not = S_1^n) \le \P(\mathcal{E}_1) + \P(\mathcal{E}_2) + \P(\mathcal{E}_3).
\end{equation*}
By the law of large numbers, $\P(\mathcal{E}_1)\rightarrow 0$ as $n \rightarrow \infty$. Next, consider the second term. By the union bound, we have
\begin{align*}
& \P(\mathcal{E}_2) \\
& = \sum_{s_1^n} p(s_1^n) P\{ (\tilde{s}_1^n,U_0^n(M_0),U_1^n(\tilde{s}_1^n,m_1),Y_1^n) \in   \Ti_\epsilon^{(n)} \\
 & \quad\quad
   \text{ for some } \tilde{s}_1^n \not = S_1^n, \text{ and } m_1 \in[1:2^{nR_1}] \mid S_1^n = s^n_1 \} \\
& \le \sum_{s_1^n} p(s_1^n) \sum_{\tilde{s}_1^n \not = s_1^n} 
\sum_{m_1=1}^{2^{nR_1}} P\{ (\tilde{s}_1^n,U_0^n(M_0),U_1^n(\tilde{s}_1^n,m_1),\nonumber \\
 & \quad\quad\quad\quad\quad\quad\quad\quad\quad\quad
\quad \quad  Y_1^n) \in   \Ti_\epsilon^{(n)} \mid S_1^n = s^n_1 \}.
\end{align*}
Conditioned on $S_1^n = s^n_1$, for all $\tilde{s}_1^n \not = s_1^n$ and for all $m_1 \in[1:2^{nR_1}]$, we have that
%\begin{align*}
$(U_0^n(M_0),U_1^n(\tilde{s}_1^n,m_1)$ $,Y_1^n)  \sim p(u_1^n( \tilde{s}_1^n,m_1) |\tilde{s}_1^n)$ $p( u_0^n(M_0), y_1^n|s^n_1)$.
%\end{align*}
Thus,
\begin{align*}
& \P(\mathcal{E}_2) \\
& \le \sum_{s_1^n} p(s_1^n)
\sum_{\tilde{s}_1^n \not = s_1^n, (\tilde{s}_1^n,u_0^n,u_1^n,y_1^n ) \in \Ti_\epsilon^{(n)}  } \sum_{m_1=1}^{2^{nR_1}} p( u_0^n(M_0), y_1^n|s^n_1) \nonumber \\
 & \quad \cdot  p(u_1^n( \tilde{s}_1^n,m_1) |\tilde{s}_1^n) \\
& = \sum_{ (\tilde{s}_1^n,u_0^n,u_1^n,y_1^n ) \in \Ti_\epsilon^{(n)} }  \sum_{s_1^n \not = \tilde{s}_1^n} \sum_{m_1=1}^{2^{nR_1}} p( u_0^n(M_0), y_1^n|s^n_1)  \nonumber \\
 & \quad \cdot p(u_1^n( \tilde{s}_1^n,m_1) |\tilde{s}_1^n) p(s_1^n) \\
& \le \sum_{ (\tilde{s}_1^n,u_0^n,u_1^n,y_1^n ) \in \Ti_\epsilon^{(n)} }  \sum_{s_1^n} \sum_{m_1=1}^{2^{nR_1}} p( u_0^n(M_0), y_1^n|s^n_1) \nonumber \\
 & \quad \cdot p(u_1^n( \tilde{s}_1^n,m_1) |\tilde{s}_1^n) p(s_1^n) \\
& = \sum_{ (\tilde{s}_1^n,u_0^n,u_1^n,y_1^n ) \in \Ti_\epsilon^{(n)} }  \sum_{m_1=1}^{2^{nR_1}} p( u_0^n(M_0), y_1^n) p(u_1^n( \tilde{s}_1^n,m_1) |\tilde{s}_1^n) \\
& \le \sum_{ (\tilde{s}_1^n,u_0^n,u_1^n,y_1^n ) \in \Ti_\epsilon^{(n)} } 2^{n R_1} 2^{-n(H(U_0,Y_1) + H(U_1|S_1) -2\delta(\epsilon) )} \\
& \le 2^{-n(H(S_1,U_0,U_1,Y_1)+\delta(\epsilon) ) + n R_1
-n(H(U_0,Y_1) + H(U_1|S_1) -2\delta(\epsilon) )}.
\end{align*}
Collecting the entropy terms at the exponent, we have
\begin{align*}
& H(S_1,U_0,U_1,Y_1) - H(U_0,Y_1) - H(U_1|S_1) \\
& = H(S_1) +  H(U_1|S_1) + H(U_0,Y_1|U_1,S_1) \\
& \quad - H(U_0,Y_1) - H(U_1|S_1) \\
& = H(S_1) - I(U_1,S_1; U_0,Y_1).
\end{align*}
Thus $\P(\mathcal{E}_2) \rightarrow 0$ as $n \rightarrow \infty$ if
\begin{align}
\label{eq:proof1}
H(S_1) + R_1 &< I(U_1,S_1; U_0,Y_1) -2\delta(\epsilon).
\end{align}
Finally, consider the third term. By the union bound, we have
\begin{align*}
& \P(\mathcal{E}_3) \\
& = \sum_{s_1^n} p(s_1^n) P\{ (\tilde{s}_1^n,U_0^n(m_0),U_1^n(\tilde{s}_1^n,m_1),Y_1^n) \in   \Ti_\epsilon^{(n)} \\
 & \quad \quad \text{ for some } \tilde{s}_1^n \not = S_1^n, m_0 \not = M_0, \text{ and } m_1 | S_1^n = s^n_1 \} \\
& \le \sum_{s_1^n} p(s_1^n) \sum_{\tilde{s}_1^n \not = s_1^n} \sum_{m_0=1}^{2^{nR_0}} \sum_{m_1=1}^{2^{nR_1}} P\{ (\tilde{s}_1^n,U_0^n(m_0),U_1^n(\tilde{s}_1^n,m_1),\nonumber \\
 & \quad \quad  Y_1^n) \in   \Ti_\epsilon^{(n)} | S_1^n = s^n_1 \}.
\end{align*}
Conditioned on $S_1^n = s^n_1$, for all $\tilde{s}_1^n \not = s_1^n$, $m_0 \not = M_0$, and for all $m_1$, we have that
%\begin{align*}
$(U_0^n(m_0),U_1^n(\tilde{s}_1^n,m_1),Y_1^n)  \sim p(u_0^n(m_0)) p(u_1^n( \tilde{s}_1^n,m_1) |\tilde{s}_1^n)p( y_1^n|s^n_1)$.
%\end{align*}
Thus,
\begin{align*}
& \P(\mathcal{E}_3) \\
& \le \sum_{s_1^n} p(s_1^n)
\sum_{\tilde{s}_1^n \not = s_1^n, (\tilde{s}_1^n,u_0^n,u_1^n,y_1^n ) \in \Ti_\epsilon^{(n)}  } \sum_{m_0=1}^{2^{nR_0}} \sum_{m_1=1}^{2^{nR_1}} p(u_0^n(m_0)) \nonumber \\
 & \quad \cdot p(u_1^n( \tilde{s}_1^n,m_1) |\tilde{s}_1^n)p( y_1^n|s^n_1) \\
& = \sum_{ (\tilde{s}_1^n,u_0^n,u_1^n,y_1^n ) \in \Ti_\epsilon^{(n)} }  \sum_{s_1^n \not = \tilde{s}_1^n} \sum_{m_0=1}^{2^{nR_0}} \sum_{m_1=1}^{2^{nR_1}} p(u_0^n(m_0)) \nonumber \\
 & \quad \cdot  p(u_1^n( \tilde{s}_1^n,m_1) |\tilde{s}_1^n)p( y_1^n|s^n_1) p(s_1^n) \\
& \le \sum_{ (\tilde{s}_1^n,u_0^n,u_1^n,y_1^n ) \in \Ti_\epsilon^{(n)} }  \sum_{s_1^n} \sum_{m_0=1}^{2^{nR_0}} \sum_{m_1=1}^{2^{nR_1}} p(u_0^n(m_0)) \nonumber \\
 & \quad \cdot p(u_1^n( \tilde{s}_1^n,m_1) |\tilde{s}_1^n)p( y_1^n|s^n_1) p(s_1^n) \\
& = \sum_{ (\tilde{s}_1^n,u_0^n,u_1^n,y_1^n ) \in \Ti_\epsilon^{(n)} }  \sum_{m_0=1}^{2^{nR_0}} \sum_{m_1=1}^{2^{nR_1}} p(u_0^n(m_0)) \nonumber \\
 & \quad \cdot  p(u_1^n( \tilde{s}_1^n,m_1) |\tilde{s}_1^n)p( y_1^n) \\
& \le \sum_{ (\tilde{s}_1^n,u_0^n,u_1^n,y_1^n )\in \Ti_\epsilon^{(n)} } 2^{n[R_0+R_1 -(H(U_0) + H(U_1|S_1) + H(Y_1) -3\delta(\epsilon) )]} \\
& \le 2^{-n(H(S_1,U_0,U_1,Y_1)+\delta(\epsilon) )} 2^{n(R_0+R_1)} \nonumber \\
 & \quad \cdot 2^{-n(H(U_0) + H(U_1|S_1) + H(Y_1) -3\delta(\epsilon) )}.
\end{align*}
Collecting the entropy terms at the exponent, we have
\begin{align*}
& H(S_1,U_0,U_1,Y_1) - H(U_0) - H(U_1|S_1) - H(Y_1)  \\
& = H(S_1) +  H(U_1|S_1) + H(U_0|U_1,S_1) + H(Y_1| S_1,U_0,U_1) \\
& \quad - H(U_0) - H(U_1|S_1) - H(Y_1) \\
& = H(S_1) - I(U_0,U_1,S_1; Y_1) - I(U_0; U_1,S_1).
\end{align*}
Thus $\P(\mathcal{E}_3) \rightarrow 0$ as $n \rightarrow \infty$ if
\begin{align}
\label{eq:proof2}
H(S_1) + R_0 + R_1 &< I(U_0,U_1,S_1; Y_1) + \nonumber \\
 & \qquad + I(U_0; U_1,S_1) -3\delta(\epsilon).
\end{align}
In summary, the probability of error for decoder 1 can be made arbitrarily small by letting $n \to \infty$ if \eqref{eq:proof1} and \eqref{eq:proof2} hold. Finally, the study of the error probability for decoder 2 follows from a similar argument.

\section{An Alternative Coding Scheme Using Superposition Coding}
\label{sec:appendix4}

We describe here the construction of coding scheme based on
superposition coding which yields the same sufficient condition as the
one stated in Theorem \ref{thm1}.

\emph{Random codebook generation}: Let $\epsilon^{\prime}>0$. Fix a joint distribution $p(u_0,u_1,u_2|s_1,s_2)$ and, without loss of generality, let $p(x|u_0,u_1,u_2,s_1,s_2)$ be a chosen deterministic function $x(s_1,s_2,u_0,u_1,u_2)$. Compute $p(u_0)$, $p(u_1|u_0,s_1)$ and $p(u_2|u_0,s_2)$ for the given source distribution $p(s_1,s_2)$.  Randomly and independently generate $2^{nR_0}$ sequences $u_0^n(m_0)$, $m_0 \in [1:2^{nR_0}]$, each according to $\prod_{i=1}^n p_{U_0}(u_{0i})$. For each source sequence $s_1^n$ and $u_0^n(m_0)$, randomly and independently generate $2^{nR_1}$ sequences $u_1^n(s_1^n,m_0,m_1)$, $m_1 \in [1:2^{nR_1}]$, each according to
$\prod_{i=1}^n p_{U_1|S_1,U_0}(u_{1,i}|s_{1i},u_{0,i}(m_0))$. Similarly, for each source sequence $s_2^n$ and $u_0^n(m_0)$, randomly and independently generate $2^{nR_2}$ sequences
$u_2^n(s_2^n,m_0,m_2)$, $m_2 \in [1:2^{nR_2}]$, each according to
$\prod_{i=1}^n p_{U_2|S_2,U_0}(u_{2,i}|s_{2,i},u_{0,i}(m_0))$. The rates $(R_0,R_1,R_2)$ are chosen so that the ensemble of generated sequences ``cover'' the set $\Ti_{\epsilon^{\prime}}^{(n)}(U_0, U_1, U_2|s_1^n,s_2^n)$ for all $s_1^n,s_2^n\in \Ti_{\epsilon^{\prime}}^{(n)}(S_1,S_2) $. Define the event
\begin{align*}
\mathcal{A}= & \{(S_1^n, S_2^n, U_0^n(m_0), U_1^n(S_1^n, m_0, m_1), \\
             &\qquad \qquad   U_2^n(S_2^n, m_0, m_2)) \not \in \Ti_{\epsilon^{\prime}}^{(n)}
          \text{ for all } m_0, m_1, m_2\}
\end{align*}
It can be shown using techniques similar to those used in the proof of the Lemma \ref{lem} that $\P(\mathcal{A}) \to 0$ as
$n \to \infty$ if
\begin{equation}
\label{eq:covb}
\setlength\arraycolsep{0.2em}
\begin{array}{rcl}
R_0 &>& I(U_0;S_1,S_2) + \delta(\e^{\prime}) , \\
R_0+R_1 &>& I(U_0;S_1,S_2) + I(S_2;U_1|S_1,U_0) + \delta(\e^{\prime}), \\
R_0+R_2 &>& I(U_0;S_1,S_2) + I(S_1;U_2|S_2,U_0) + \delta(\e^{\prime}),  \\
R_0+R_1+R_2 &>& I(U_0;S_1,S_2) +
I(S_2;U_1|S_1,U_0) \\
             & &\qquad +  I(S_1,U_1;U_2|S_2,U_0) + \delta(\e^{\prime})
\end{array}
\end{equation}
where $\delta(\e^{\prime}) \to 0$ as $\e^{\prime} \to 0$.

\emph{Encoding:} For each source sequence $(s_1^n,s_2^n)$, choose a
triple $(m_0,m_1,m_2) \in [1:2^{nR_0}] \times [1:2^{nR_1}] \times
[1:2^{nR_2}]$ such that $(s_1^n, s_2^n, u_0^n(m_0), u_1^n(s_1^n,m_0,m_1), u_2^n(s_2^n,m_0,m_2)) \in \Ti_{\epsilon^{\prime}}^{(n)}(S_1,S_2,U_0,U_1,U_2)$.
If there is no such triple, choose $(m_0,m_1,m_2) = (1,1,1)$. Then at time $i\in [1:n]$, the encoder transmits $x_i = x(s_{1i}, s_{2i}, u_{0i}(m_0), u_{1i}(s_1^n,m_0,m_1), u_{2i}(s_2^n,m_0,m_2))$.

\emph{Decoding:} Let $\e > 0$. Decoder 1 declares $\uh_1^n$ to be the estimate of the source $s_1^n$ if it is the unique sequence  such that $(\uh_1^n, u_0^n(m_0), u_1^n(\uh_1^n,m_0,m_1), y_1^n) \in \aep(S_1,U_0,U_1,Y_1)$ for some $(m_0,m_1) \in [1:2^{nR_0}] \times [1:2^{nR_1}]$. Similarly,
decoder 2 declares $\uh_2^n$ to be the estimate of the source $u_2^n$ if it is the unique sequence  such that $(\uh_2^n, u_0^n(m_0), u_2^n(\uh_2^n,m_0,m_2), y_2^n) \in \aep(S_2,U_0,U_2,Y_2)$ for some $(m_0,m_2) \in [1:2^{nR_0}] \times [1:2^{nR_2}]$.

\emph{Error events}: Assume that $(m_0,m_1,m_2)=(M_0,M_1,M_2)$ is selected for the source sequence  $(s_1^n, s_2^n)$. The error event for decoder 1 can be divided into two parts:
\begin{enumerate}
\item There exists a sequence $\hat{s}_1^n \neq S_1^n$ such that $(\hat{s}_1^n, u_0^n(M_0), u_1^n(\hat{s}_1^n,M_0,m_1), y_1^n) \in \aep(S_1,U_0,U_1,Y_1)$ for some $m_1 \in [1:2^{nR_1}]$. The probability of this event vanishes as  $n \to \infty$  if the following inequality is satisfied
\begin{align}
\label{dec1b}
H(S_1) + R_1 &< I(U_1,S_1; Y_1|U_0) + I(S_1;U_0)- \delta(\e).
\end{align}
  \item  There exists a sequence $\hat{s}_1^n \neq S_1^n$ and an $m_0 \neq M_0$ such that $(\hat{s}_1^n, u_0^n(m_0), u_1^n(\hat{s}_1^n,m_0,m_1), y_1^n)\in \aep(S_1,U_0,U_1,Y_1) $ for some $m_1 \in [1:2^{nR_1}]$. The probability of this event can be made arbitrarily small as $n \to \infty$ if
\begin{align}
H(S_1) + R_0 + R_1 &< I(U_0,U_1,S_1; Y_1) + \nonumber \\
& \qquad + I(U_0; S_1) - \delta(\e).
\label{dec2b}
\end{align}
\end{enumerate}
Similarly, the probability of error for decoder 2 can be made arbitrarily small as  $n \to \infty$ if
\begin{equation}
\label{dec3b}
\setlength\arraycolsep{0.2em}
  \begin{array}{rcl}
H(S_2) + R_2 &<& I(U_2,S_2; Y_2|U_0) + I(S_2;U_0)- \delta(\e), \\
H(S_2) + R_0 + R_2 &<& I(U_0,U_2,S_2; Y_2) + I(U_0; S_2)- \delta(\e).
\end{array}
\end{equation}
The rest of the proof follows by letting $\e^{\prime},\e \to 0$, then $n \to \infty$, and finally by eliminating $(R_0, R_1, R_2)$ from \eqref{eq:covb}, \eqref{dec1b}, \eqref{dec2b}, and \eqref{dec3b} using the
Fourier--Motzkin elimination algorithm.

\bibliographystyle{ieeetr} % try also ieeetr, apalike, acm, siam,

\end{document}